\documentclass{article}
\usepackage{spconf,amsmath,graphicx}

\title{Adaptive Predictive Sampling and Communication \\ for Real-time Monitoring}

\name{Erfan Delfani and Nikolaos Pappas}
\address{Department of Computer and Information Science, Linköping University, Linköping, Sweden}

\usepackage{graphicx} 
\usepackage{amsmath}
\usepackage{amssymb}
\usepackage{mathtools}
\usepackage{xcolor}
\usepackage[table]{xcolor}
\usepackage{float}
\usepackage{multirow}
\usepackage{booktabs}
\usepackage{caption}
\usepackage{subcaption}
\usepackage{microtype}
\usepackage{verbatim}

\usepackage{amsthm}
\newtheorem{theorem}{Theorem}
\newtheorem{lemma}{Lemma}

\begin{document}
	
    \setlength{\textfloatsep}{6pt plus 2pt minus 2pt}
    \setlength{\floatsep}{6pt plus 2pt minus 2pt}
    \setlength{\intextsep}{6pt plus 2pt minus 2pt}
    
    \setlength{\abovedisplayskip}{4pt plus 2pt minus 2pt}
    \setlength{\belowdisplayskip}{4pt plus 2pt minus 2pt}

    \captionsetup[figure]{skip=3pt}
    \captionsetup[table]{position=above, skip=3pt}
	
	\maketitle
	
	\begin{abstract}   
        We propose an adaptive predictive sampling and communication framework for real-time remote monitoring over packet-erasure channels. Rather than relying on a prescribed dynamical model or continuous sensing, our model-free method uses Taylor-expansion-based prediction to adaptively estimate local signal dynamics and proactively schedule the next sampling times. An analytical compensation factor accounts for packet losses to satisfy a prescribed reconstruction-error level. We develop adaptive sample-wise and episodic policies and evaluate their performance against event-driven and uniform sampling and communication baselines.
	\end{abstract}
    \begin{keywords}
    Real-time monitoring, sampling and communication, adaptive predictive control, Taylor series, AoI.
    \end{keywords}
	
	\section{Introduction}
	
	Real-time remote monitoring in Internet of Things (IoT) and cyber-physical systems requires resource-constrained sensors to track continuous physical processes and deliver timely updates to a remote destination, where a real-time estimator reconstructs the monitored signal. Standard uniform sampling is often inefficient because its rate must accommodate the fastest signal variations. When the process evolves slowly, this fixed rate generates redundant samples and wastes communication bandwidth and sensor energy \cite{AstromBernhardsson2002,Miskowicz2006,kutsevol2023experimental}.

    Event-driven sampling reduces such redundancy by transmitting only when the discrepancy between the current signal and the receiver's estimate exceeds a prescribed threshold \cite{AstromBernhardsson2002,Miskowicz2006,Tabuada2007}. However, evaluating this condition still requires fine-grained signal sampling. Self-triggered schemes avoid continuous sensing by determining future sampling instants from previously acquired information \cite{AntaTabuada2010,HeemelsJohanssonTabuada2012}, while locally adaptive schemes adjust sampling intervals according to observed signal dynamics \cite{FeiziGoyalMedard2010,FeiziGoyalMedard2012}. Wireless packet erasure channels further complicate remote reconstruction by delaying update delivery and causing error accumulation \cite{Schenato2008,NetworkedEventBasedDropout2009,PremaratneHalgamugeMareels2013,PengHan2016}.
    Thus, efficient remote monitoring requires sampling and communication decisions that account for both signal dynamics and the end-to-end information path, consistent with recent approaches based on Age of Information (AoI) and semantics-aware communication \cite{YatesSunBrownKaulModianoUlukus2021,kosta2017age,kountouris2021semantics}.
    Many remote estimators, however, assume a prescribed dynamic model, such as Markovian, autoregressive, or state-space models \cite{luo2025information,SunPolyanskiyUysal2020,ornee2021sampling,shisher2024monotonicity}. Predictors such as Kalman filters likewise depend on assumed system parameters, making them less suitable when signal dynamics are unknown. Moreover, existing model-free reconstruction frameworks \cite{FeiziGoyalMedard2010,FeiziGoyalMedard2012} do not consider communication over packet-erasure channels.
	
	To address these gaps, we propose a joint sampling and communication framework for predicting and reconstructing continuous signals without assuming a prescribed dynamical model. Our approach analytically determines the next sampling intervals in advance to achieve a target reconstruction error, based on local signal dynamics that are derived and adaptively updated using previous samples. Specifically, we: 
	(i) develop a predictor based on Taylor expansion and Newton divided differences using the past $N+1$ successfully delivered samples, together with an adaptive filter for tracking local derivative bounds;
	(ii) derive an analytical packet-erasure compensation factor $\mathcal{C}$ that enables proactive sampling and transmission over unreliable channels; and
	(iii) develop two adaptive policies: a sample-wise policy and an episodic policy for scenarios in which per-sample control is infeasible.

	\section{System Model}
	
	We consider a real-time remote monitoring system (Fig. \ref{fig_SystemModel})
	where a source samples a continuous-time signal $x(t)$ and transmits
	measurements to a destination over an unreliable erasure channel. Under packet-based transmission, each transmission succeeds without distortion with probability $p_s$ and fails with probability $1-p_s$.
    An error-free, instantaneous link-layer
	\textsc{ack} informs the source about the outcome of the transmission. The destination reconstructs the signal in real time, and our objective is to minimize sampling and transmission operations while targeting a prescribed peak reconstruction error $\epsilon_d$. Source and destination maintain \emph{identical} real-time
	predictors driven by successfully delivered samples, while a source-side \emph{controller} schedules future
	sampling instances.
	
	The controller operates in either \emph{sample-wise} or \emph{episodic} mode.
Sample-wise control determines only the next sampling time, as in \emph{Event-Driven (\textsc{ed})} and
\emph{adaptive predictive joint sampling and communication (\textsc{apjsc})} policies.
Episodic control jointly determines the sampling schedule over the next block
of $D$ samples, as in the \emph{block-based episodic} policy. \emph{Uniform} is a non-adaptive baseline with a sampling interval that remains fixed over time. Under all policies, the latest sample is transmitted until it is delivered or replaced by a new one (e.g., in \textsc{apjsc}, samples are never dropped because sampling awaits delivery). 
Transmission attempts occur at the beginning of discrete time slots of duration $T_C$.
Let $k \in \{1,2,3,\dots\}$ index successfully delivered samples. We denote by $t'_k$ the reception time of the $k$-th delivered sample, $t^\ast_k$ its generation (sampling) time, and $L_k \in \{1,2,3,\dots\}$ the number of transmissions required for delivery. With sampling and transmission synchronized to slot boundaries and negligible transmission and propagation delays, the $k$-th sample is received at the onset of its successful transmission attempt: $t'_k \!=\! t^\ast_k \!+\! (L_k \!-\! 1)T_C$.
	
	We consider clean signals with negligible measurement noise, as obtained from accurate sensors or appropriate analog filtering, and signals that are differentiable up to order $N+1$ (except at several isolated points; see stepped-frequency signals in Section~\ref{sec_SteppedFreResults}). No prior dynamical model of $x(t)$ is assumed; prediction therefore relies solely on previously collected
	samples. For this model-free signal, we adopt a Taylor-series-based prediction method.

	\begin{figure}[t]
		\centering
		\includegraphics[width=0.98\linewidth]{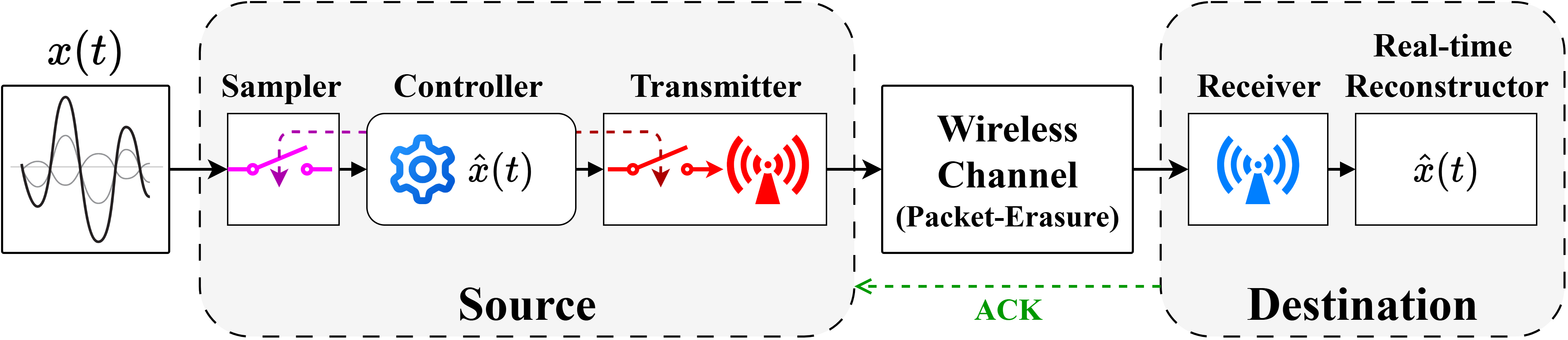}
        \vspace{4pt}
		\caption{System Model.}
		\label{fig_SystemModel}
	\end{figure}

	\textbf{Taylor Expansion for Prediction:}  
    For any $t \ge t'_k$, the Taylor expansion of $x(t)$ around $t^\ast_k$ is given by:
	\begin{align}
		x(t) = \sum_{n=0}^{N} \frac{x^{(n)}(t^\ast_k)}{n!} \Delta^n
		+ \frac{x^{(N+1)}(\xi)}{(N+1)!} \Delta^{N+1},
	\end{align}
	where $\Delta=t-t^\ast_k$ is the \emph{AoI}, i.e., the time elapsed since the generation of the last delivered sample, and $\xi\in(t^\ast_k,t)$ is an unknown point. We obtain the derivatives $x^{(n)}(t^\ast_k)$ from the delivered samples $x(t^\ast_k),x(t^\ast_{k-1}),\ldots,x(t^\ast_{k-N})$ via non-uniform Newton divided differences \cite[Section~2.1.3]{stoer1993introduction}, yielding $x(t) \!=\! \hat{x}(t) \!+\! R_N(t)$, where:
	\begin{gather}
		\hat{x}(t) \!=\! f[t^\ast_k] \!+\! \!\sum_{n=1}^N f[t^\ast_k, t^\ast_{k-1}, \dots, t^\ast_{k-n}] \!\prod_{j=0}^{n-1} (t \!-\! t^\ast_{k-j}), \\
		R_N(t) \!=\! \frac{x^{(N+1)}(\zeta)}{(N\!+\!1)!}  \!\prod_{j=0}^{N} (t \!-\! t^\ast_{k-j}) \!=\! \frac{x^{(N+1)}(\zeta)}{(N\!+\!1)!}  \mathcal{Y}_k(\Delta), \\
		\mathcal{Y}_k(\Delta)
		= \prod_{j=0}^{N}(\Delta+t^\ast_k-t^\ast_{k-j}),
	\end{gather}
	where $f[t^\ast_k]\!=\!x(t^\ast_k)$ and the divided differences are given recursively:
	\begin{align*}
    \begin{array}{cc}
		f[t^\ast_k, t^\ast_{k-1}, \dots, t^\ast_{k-n}] = \frac{f[t^\ast_k, \dots, t^\ast_{k-n+1}] - f[t^\ast_{k-1}, \dots, t^\ast_{k-n}]}{t^\ast_k - t^\ast_{k-n}}.
        \end{array}
	\end{align*}
    
	Here, $\zeta\in(t^*_{k-N},t)$ is an unknown point, $\hat{x}(t)$ denotes the predicted (reconstructed) signal, and $R_N(t)$, which is a polynomial of order $N+1$ in $\Delta$, $R_N(t)=P_{N+1}(\Delta)$, characterizes the \emph{analytical prediction error}, $e_{\textsc{p}}(t)=|P_{N+1}(\Delta)|$. Since $x^{(N+1)}(\zeta)$ is unknown, $e_{\textsc{p}}(t)$ cannot be computed explicitly; in the following sections, we develop methods to adaptively estimate it. The \emph{true reconstruction error} is:
	\begin{align}
		e_\textsc{r}(t) \!=\! \left|x(t)-\hat{x}(t)\right|,
	\end{align}
	which is known only at the sampling times of successfully delivered samples. Specifically,
	\begin{align}
		\epsilon^{\mathrm{true}}_{k+1}
		\!=\! e_\textsc{r}(t^\ast_{k+1})
		\!=\! \left|x(t^\ast_{k+1})-\hat{x}(t^{\ast-}_{k+1})\right|,
		\label{eqn_TrueError_k}
	\end{align}
	where $t^{\ast-}_{k+1}$ denotes the instant immediately prior to $t^\ast_{k+1}$, indicating that $\hat{x}(t^{\ast-}_{k+1})$ is computed using $x(t^\ast_{k}), \ldots, x(t^\ast_{k-N})$.

	\section{Sample-Wise Control}
	\label{sec_SampleWise}
	We first consider a baseline event-driven (hereafter, \textsc{ed}) policy and formulate it for both perfect and unreliable channels. We then propose the \textsc{apjsc} scheme for scenarios in which continuous sampling is infeasible or costly.
    
	\textbf{\textsc{ED}:} 
	Under the standard \textsc{ed} policy, a sample is transmitted only when the
	true reconstruction error exceeds $\epsilon_d$:
	\begin{align}
		t_{k+1} &= t^\ast_k + \Delta_{\mathrm{ED}}, \\
		\Delta_{\mathrm{ED}} &= \inf \left\{ \Delta = m T_C \;\middle|\; m \in \mathbb{N}_0, \; e_\textsc{r}(t^\ast_k + \Delta) \ge \epsilon_d \right\}.
	\end{align}
	
	For an imperfect channel, failed transmissions delay delivery and increase the reconstruction error. We therefore proactively trigger a transmission when the \emph{expected} reconstruction error over channel realizations reaches $\epsilon_d$, i.e., $t_{k+1} \!=\! t^\ast_k + \Delta^\ast_{\mathrm{ED}}$, where:
	\begin{align}
		\Delta^\ast_{\mathrm{ED}} \!=\! \inf \!\left\{ \Delta \!=\! m T_C \;\middle|\; m \!\in\! \mathbb{N}_0, \; \mathbb{E}\!\left[e_\textsc{r}(t^\ast_k \!+\! \Delta)\right] \!\ge\! \epsilon_d \right\}.
	\end{align}
	
	The number of transmission attempts $\mathcal{L}$ is geometrically distributed: $\mathbb{P}(\mathcal{L}\!=\!\ell)=(1\!-\!p_s)^{\ell-1}p_s$. Let $\tau_1 = t^\ast_k + \Delta_0$ denote the initial
	attempt instant, with error $e_\textsc{r}(\tau_1)$. Assuming
	$x^{(N+1)}(\zeta)$ remains approximately constant over the transmission
	slots until delivery (for typical $p_s$), the interpolation error at a
	subsequent attempt $\tau_{\ell} = t^\ast_k + \Delta_0 + (\ell\!-\!1)T_C$ scales as:
	\begin{align}
    \begin{array}{cc}
         e_\textsc{r}(\tau_\ell) \approx
		\prod_{j=0}^{N}\frac{\tau_\ell-t^\ast_{k-j}}
		{\tau_1-t^\ast_{k-j}} e_\textsc{r}(\tau_1).
    \end{array}
		\label{eqn_ch_approx}
	\end{align}
    
	Taking the expectation over the channel realization yields
	$\mathbb{E}\!\left[e_\textsc{r}(t^\ast_k\!+\!\Delta)\right]\!=\!\sum_{\ell=1}^{\infty} \!\mathbb{P}(\mathcal{L}\!=\!\ell) e_\textsc{r}(\tau_\ell) \approx\mathcal{C}e_\textsc{r}(\tau_1)$, where:
	\begin{align}
		\mathcal{C} \!=\!\! \sum_{\ell=1}^{\infty} p_s (1\!-\!p_s)^{\ell-1}
		\prod_{j=0}^{N}\frac{\Delta_0\!+\!(\ell\!-\!1)T_C\!+\!t^\ast_k\!-\!t^\ast_{k-j}}
		{\Delta_0\!+\!t^\ast_k\!-\!t^\ast_{k-j}}.
		\label{eqn_C_coef}
	\end{align}
    
	The triggering condition therefore becomes:
	\begin{align}
		\Delta^\ast_{\mathrm{ED}} \!=\!
		\inf \!\left\{ \Delta_0 \!=\! m T_C \;\middle|\; m \!\in\! \mathbb{N}_0, \;
		e_\textsc{r}(t^\ast_k\!+\!\Delta_0) \!\ge\! \frac{\epsilon_d}{\mathcal{C}} \right\}.
		\label{eqn_ED_C}
	\end{align}

	\textbf{\textsc{APJSC}:} 
	When continuous sensing is infeasible, we propose the \textsc{apjsc} scheme, based on the analytical prediction error. A new sample is acquired when the analytical prediction error $e_{\textsc{p}}(t)=|P_{N+1}(\Delta)|$ reaches $\epsilon_d/\mathcal{C}$ and is transmitted until successful delivery, after which the next sampling time is determined. Thus, $t_{k+1} = t^\ast_k + \Delta^\ast_{\mathrm{APJSC}}$, where:
	\begin{align}
		\Delta^\ast_{\mathrm{APJSC}} \!=\! \inf \!\left\{ \Delta \!=\! m T_C \middle|  \frac{|x^{(N+1)}(\zeta)|}{(N\!+\!1)!}\mathcal{Y}_k(\Delta) \!\ge\! \frac{\epsilon_d}{\mathcal{C}} \right\}\!.
	\end{align}
    
	Here, $x^{(N+1)}(\zeta)$ is the $(N\!+\!1)$-th derivative at an unknown
	$\zeta \in (t^\ast_{k-N},t)$. If a global bound $U_{N+1}$ is known,
	it can replace $|x^{(N+1)}(\zeta)|$, but this yields a conservative criterion
	that fails to capture local signal dynamics. We therefore develop an adaptive
	criterion to update a local estimate of $x^{(N+1)}(\zeta)$.
	
	\textit{Adaptive filter:}
	We parameterize the unknown derivative bound
	$\frac{|x^{(N+1)}(\zeta)|}{(N+1)!}$ by an adaptive coefficient $\alpha_k$,
	yielding the prediction error polynomial
	$e_{\textsc{p}}(t)=\alpha_k\mathcal{Y}_k(\Delta)$.
	We formulate the tracking of this local bound as an adaptive \emph{system identification} problem \cite{haykin2008adaptive}: $\epsilon^{\mathrm{true}}_{k+1}
	= \alpha_k\mathcal{Y}^\ast_k + v_k$,
	where $\mathcal{Y}^\ast_k=\mathcal{Y}_k(t^\ast_{k+1}-t^\ast_k)$ is determined
	from the sampling timestamps, $\alpha_k$ is the local
	derivative-bound parameter, and $v_k$ represents residual modeling error.
	Upon successful delivery, $\epsilon^{\mathrm{true}}_{k+1}$ from
	\eqref{eqn_TrueError_k} becomes available, and $\alpha_k$ is updated using a normalized least mean squares (NLMS) filter
	\cite{haykin2008adaptive}:
	\begin{align}
        \begin{array}{cc}
		\alpha_{k+1} = \alpha_k + \mu
		\frac{\mathcal{Y}^\ast_k}{(\mathcal{Y}^\ast_k)^2+\gamma}
		\left(\epsilon^{\mathrm{true}}_{k+1}
		-\alpha_k\mathcal{Y}^\ast_k\right),
        \end{array}
	\end{align}
	where $\mu\!\in\!(0,2)$ is the adaptation step size and $\gamma>0$ is a small parameter that prevents numerical singularity as
	$\mathcal{Y}^\ast_k\to0$.

	\section{Episodic Control}
	\label{Sec_EpisodicControl}
	
	When sample-wise control is impractical, we adopt an episodic approach that fixes the sampling interval at $h_r = Z_r T_C$ for each episode of $D$ samples, with $h_r$ determined by an $N$-th-order real-time predictor, where $N \ll D$.
	Within episode $r$, each sample is transmitted for at most $Z_r$ attempts; samples not delivered within this limit are dropped. 
	Let $\tau_k=(t'_k)^-$ denote the instant immediately before the delivery of the
$k$-th sample. The prediction error at $\tau_k$ is $e_\textsc{p}(\tau_k) \!=\! \frac{\vert{}x^{(N+1)}(\zeta_{k,r})\vert{}}
		{(N+1)!}Y_k$:
	\begin{align} 
    \begin{array}{cc}
		Y_k=\prod_{j=0}^{N}(\tau_k\!-\!t_{k-j-1}^\ast)=\!\prod_{j=0}^{N} \Delta_{k-j-1}(\tau_k),
        \end{array}
	\end{align}
	where $\zeta_{k,r}\!\in\!(t_{k-N-1}^\ast,\tau_k)$ is unknown, and $\Delta_{k-j}(t)\!=\!t\!-\!t_{k-j}^\ast$ denotes the AoI of the $(k\!-\!j)$-th delivered sample.
	We consider the expected peak prediction error over episode $r$,
	$\epsilon_r=\mathbb{E}\!\left[\max_k e_\textsc{p}(\tau_k)\right]$, and choose
	$h_r$ such that $\epsilon_r\leq\epsilon_d$. As before, we parameterize
	$\frac{\vert x^{(N+1)}(\zeta_{k,r})\vert}{(N\!+\!1)!}$ by an adaptive
	coefficient $\alpha_r$, updated at the end of each episode using the observed
	true peak error
	$\epsilon^{\mathrm{true}}_r=\max_k\epsilon^{\mathrm{true}}_k$. Thus,
	\begin{align}
        \begin{array}{cc}
		\epsilon_r
		=\mathbb{E}\!\left[\max_k e_\textsc{p}(\tau_k)\right]
		=\alpha_r\mathbb{E}\!\left[\max_k Y_k\right]
		=\alpha_r\tilde{\mathcal{Y}}_r,
        \end{array}
	\end{align}
	with the NLMS update: $\alpha_{r+1} \!=\! \alpha_r \!+\! \mu \frac{\tilde{\mathcal{Y}}_r}{(\tilde{\mathcal{Y}}_r)^2\!+\!\gamma} \!\left(\epsilon^{\mathrm{true}}_r\!\!-\!\alpha_r\tilde{\mathcal{Y}}_r\right)$. Hence, the problem reduces to deriving
	$\tilde{\mathcal{Y}}_r=\mathbb{E}\!\left[\max_k Y_k\right]$.

	\begin{lemma}
		In episode $r$ with $h_r=Z_rT_C$, the metric $Y_k$ is:
		\begin{align}
        \begin{array}{cc}
             Y_k = T_C^{N+1} \prod_{j=0}^{N}
			\big( Z_r \sum_{i=0}^{j} G_{k-i} + X_k - 1 \big),
        \end{array}
		\end{align}
		where the delivery-slot index $X_k \!\in\! \{1,\ldots,Z_r\}$ and the inter-delivery 
	sampling-slot gaps $G_m\!=\!K_m\!-\!K_{m-1} \!\ge\! 1$ have distributions 
	$\mathbb{P}(X_k\!=\!x)\!=\!p_s(1\!-\!p_s)^{x-1}/\rho$, and 
	$\mathbb{P}(G_m\!=\!g)\!=\!q^{g-1}\rho$, $g \!\geq\! 1$. Here, $q \!=\! (1 \!-\! p_s)^{Z_r}$, $\rho \!=\! 1 \!-\! q$, and $K_k$ denotes the sampling-slot index of the $k$-th delivered sample (Fig. \ref{fig_episodic_vars}).
	\end{lemma}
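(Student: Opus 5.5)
We set up slot-level bookkeeping for the episodic policy and then evaluate each AoI factor $\Delta_{k-j-1}(\tau_k)$ directly. Within episode $r$, sampling instants lie on the grid $t = t_0 + K h_r = t_0 + K Z_r T_C$, indexed by the sampling-slot index $K \in \mathbb{N}$. Each sample gets at most $Z_r$ transmission attempts, one per $T_C$ slot. A sample therefore occupies exactly the $Z_r$ channel slots before the next sampling instant. If it is still undelivered after them, it is replaced (dropped).

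Let the $k$-th delivered sample have sampling-slot index $K_k$, so that $t^\ast_k = t_0 + K_k Z_r T_C$. Let it be delivered on its $X_k$-th attempt, $X_k\in\{1,\dots,Z_r\}$. Then $t'_k = t^\ast_k + (X_k-1)T_C$, matching the system-model relation $t'_k = t^\ast_k + (L_k-1)T_C$ with $L_k=X_k$. The instant $\tau_k=(t'_k)^-$ coincides in value with $t'_k$, since the AoI is continuous and the prediction at $\tau_k$ still uses samples $k-1,\dots,k-N-1$.

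The algebraic step is to compute, for $j=0,\dots,N$,
\[
\Delta_{k-j-1}(\tau_k) = t'_k - t^\ast_{k-j-1} = (K_k - K_{k-j-1})Z_rT_C + (X_k-1)T_C .
\]
A telescoping sum gives $K_k - K_{k-j-1} = \sum_{i=0}^{j}(K_{k-i}-K_{k-i-1}) = \sum_{i=0}^{j} G_{k-i}$. Substituting this and factoring $T_C$ out of each of the $N+1$ factors yields the claimed product
\[
Y_k = T_C^{N+1}\prod_{j=0}^{N}\Big(Z_r\sum_{i=0}^{j}G_{k-i} + X_k - 1\Big).
\]
Each $G_m\ge1$ because distinct delivered samples come from distinct sampling slots.

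The distributions follow from the i.i.d.\ erasure channel. Each sampling slot independently yields a delivered sample with probability $\rho = 1-(1-p_s)^{Z_r}$ and a dropped sample with probability $q=(1-p_s)^{Z_r}$. The gap $G_m$ between consecutive delivering slots is therefore geometric: $G_m-1$ failed slots followed by a success, giving $\mathbb{P}(G_m=g)=q^{g-1}\rho$.

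Conditioned on the slot delivering, the attempt index is a geometric variable truncated at $Z_r$. Hence $\mathbb{P}(X_k=x)=p_s(1-p_s)^{x-1}/\rho$ for $x\le Z_r$, and these probabilities sum to $\rho/\rho=1$. Independence of the slots also makes $X_k$ independent of the past gaps $G_{k-i}$. This independence is not needed for the formula itself but will be needed when computing $\tilde{\mathcal{Y}}_r$.

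The step needing the most care is the index bookkeeping. We must confirm that $\tau_k$ uses samples $k-1,\dots,k-N-1$, so that the telescoped sums run over $G_k,\dots,G_{k-j}$ rather than being shifted by one. We must also confirm that the $Z_r$ attempts of a sample never spill past the next sampling instant, which is exactly the drop rule.

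A second subtlety arises at episode boundaries, where $h$ changes. There we would state the lemma for $k$ whose $N+1$ predecessors lie within episode $r$, or accept it as an approximation, consistent with $N\ll D$.
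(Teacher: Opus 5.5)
Your proposal is correct and follows essentially the same route as the paper: you place sampling and delivery instants on the slot grid, write $\tau_k - t^\ast_{k-j-1} = T_C\big[Z_r(K_k-K_{k-j-1}) + X_k - 1\big]$, telescope $K_k-K_{k-j-1}$ into $\sum_{i=0}^{j} G_{k-i}$, and derive the distributions from the per-sampling-slot delivery probability $\rho$ (shifted geometric gaps) and the attempt index conditioned on delivery (truncated geometric). Your grid origin differs from the paper's $\big[(K-1)Z_r+1\big]T_C$ only by a constant that cancels in the differences. Your remarks on independence and episode boundaries are harmless additions.
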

	\begin{proof}[Proof Sketch]
		Relative to the episode start time, the $(k\!-\!j\!-\!1)$-th delivered sample is generated at 
	$t^\ast_{k-j-1}=\big[(K_{k-j-1}\!-\!1)Z_r\!+\!1\big]T_C$, while the $k$-th sample 
	is delivered at $\tau_k=\big[(K_k\!-\!1)Z_r\!+\!X_k\big]T_C$. Thus, 
	$\tau_k-t^\ast_{k-j-1} 
	\!=\!T_C\big[Z_r(K_k\!-\!K_{k-j-1})\!+\!X_k\!-\!1\big]$. 
	Since $K_k\!-\!K_{k-j-1}\!=\!\sum_{i=0}^{j}G_{k-i}$, taking the product 
	over $j=0,\ldots,N$ yields the expression for $Y_k$. 
	Conditioned on success within a sampling slot, $X_k$ follows a truncated 
	geometric distribution with normalization 
	$\rho$, while the gaps $G_m$ are i.i.d. shifted geometric with success probability $\rho$. 
	\end{proof}
	
	\begin{theorem}
		For $D \!\gg\! N$,  $\tilde{\mathcal{Y}}_r$ is approximated by:
		\begin{align}
			\tilde{\mathcal{Y}}_r
			&\!\approx
			\mathbb{P}(\mathcal{G}\!=\!1)\mathbb{E}_{\mathcal{X}}[Y] \\
			&\!+\!
			T_C^{N+1} \! \sum\nolimits_{g=2}^{g_{\mathrm{cut}}}
			\!\mathbb{P}(\mathcal{G}\!=\!g)
			\! \prod\nolimits_{j=0}^{N}
			\!\!\Big[
			Z_r\!(g\!+\!j\bar{G}_{|g})
			\!+\!\mathbb{E}[X\!-\!1]
			\Big], \notag
		\end{align}
		where
		$\mathbb{P}(\mathcal{G}\!=\!g)
		\!=\!(1\!-\!q^g)^M\!-\!(1\!-\!q^{g-1})^M$, 
		$\bar{G}_{|g}\!\triangleq\mathbb{E}[G\!\mid\! G \!\le\! g]
		\!=\!\frac{1}{\rho}-\frac{gq^g}{1-q^g}$,
		and the zero-drop expectation is
		$
			\mathbb{E}_{\mathcal{X}}[Y]
			\!=\!T_C^{N+1}\!\sum_{x=1}^{Z_r}
			\!\!\Big(F_{\mathcal{X}}^D(x)\!-\!F_{\mathcal{X}}^D(x\!-\!1)\Big)
			\!\prod_{j=0}^{N}
			\!\Big((j\!+\!1)Z_r\!+x\!-\!1\Big),
		$
		with CDF
		$F_{\mathcal{X}}(x)=\frac{1-(1-p_s)^x}{\rho}$ for $x\in\{1,\dots,Z_r\}$.
	\end{theorem}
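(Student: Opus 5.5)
The plan is to start from the Lemma's representation
\[
Y_k=T_C^{N+1}\prod_{j=0}^{N}\Big(Z_r\sum_{i=0}^{j}G_{k-i}+X_k-1\Big).
\]
Since every factor is increasing in every $G_{k-i}$ and in $X_k$, the maximum over the $M$ delivered samples of the episode ($M\approx D$ minus drops, with $D\gg N$ so edge effects at the episode boundary are negligible) is governed by the largest gap. I define $\mathcal{G}=\max_k G_k$ over the $M$ i.i.d. shifted-geometric gaps from the Lemma. Because $\mathbb{P}(G\le g)=1-q^g$, independence gives
\[
\mathbb{P}(\mathcal{G}\le g)=(1-q^g)^M,
\]
and differencing yields the stated $\mathbb{P}(\mathcal{G}=g)$. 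I then condition on $\mathcal{G}=g$ and apply the law of total expectation, which produces the two terms of the theorem.

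\textbf{Case $\mathcal{G}=1$ (no drops).} All gaps equal $1$, so $\sum_{i=0}^jG_{k-i}=j+1$ and $Y_k=T_C^{N+1}\prod_j\big((j+1)Z_r+X_k-1\big)$. This is a deterministic increasing function of $X_k$ alone. Hence $\max_kY_k$ is that function evaluated at $\mathcal{X}=\max_kX_k$. With the $X_k$ i.i.d. with CDF $F_{\mathcal{X}}$ from the Lemma, $\mathcal{X}$ has CDF $F_{\mathcal{X}}^D$; I use $D$ here since with no drops all $D$ samples are delivered. Summing over $x$ gives exactly $\mathbb{E}_{\mathcal{X}}[Y]$.

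\textbf{Case $\mathcal{G}=g\ge2$.} Here I use a mean-field approximation. The maximizing $Y_k$ is taken to be the one whose most recent gap $G_k$ equals the maximal gap $g$. The remaining $N$ gaps $G_{k-1},\dots,G_{k-N}$ in the window are replaced by their conditional mean given that none exceeds $g$, namely $\bar G_{|g}=\mathbb{E}[G\mid G\le g]$. I compute this from the truncated geometric: $\sum_{m=1}^g m\,q^{m-1}\rho/(1-q^g)$ evaluates to $1/\rho-gq^g/(1-q^g)$. Then $\sum_{i=0}^jG_{k-i}\approx g+j\bar G_{|g}$. I also replace $X_k$ by its mean rather than its maximum, since the large-gap sample is essentially a single sample rather than a maximum over many. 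Pulling the expectation inside the product, a first-order (Jensen-type) approximation, gives the bracketed factor. I truncate the sum at $g_{\mathrm{cut}}$, where $\mathbb{P}(\mathcal{G}=g)$ becomes negligible because of the geometric tail.

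\textbf{Main obstacle.} The hard step is justifying the $g\ge2$ approximation. Placing the maximal gap at position $i=0$ maximizes the product, since that gap then appears in all $N+1$ factors, so this placement is the correct choice for the maximizer. However, the competition between samples with large $X_k$ and samples with large gaps is ignored. The justification is that a single extra gap adds $Z_r$ slots to every factor, which dominates the variation of at most $Z_r-1$ slots in $X_k$. I would state this dominance argument explicitly, together with independence of the other gaps given the maximum (valid asymptotically for $D\gg N$), and accept the remaining errors as approximation errors consistent with the ``$\approx$'' in the statement.
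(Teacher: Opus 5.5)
Your proposal is correct and follows essentially the same route as the paper's sketch. Both condition on the worst gap $\mathcal{G}$ using the order-statistic PMF, treat $g=1$ via $\mathcal{X}=\max_m X_m$, and for $g\ge 2$ place the maximal gap in the most recent slot, replace the preceding gaps and $X_k$ by their (conditional) means, and truncate at $g_{\mathrm{cut}}$; your explicit derivation of $\bar{G}_{|g}$ and the dominance argument ($Z_r$ extra slots per factor versus at most $Z_r-1$ from $X_k$) add detail the paper leaves implicit.
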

	\begin{proof}[Proof Sketch]
		The number of samples delivered in an episode concentrates around $M\!\approx \!D\rho$ for $D\!\gg\!N$. We condition on the worst inter-sample gap $\mathcal{G}\!\triangleq\!\max_{1\le m\le M}G_m$, whose PMF is obtained from order statistics.
		For $g=1$ (no drops), all $G_m\!=\!1$, and $Y_k$ increases with $X_k$; hence the peak is governed by $\mathcal{X}\!=\!\max_m X_m$, giving the first term. For $g\!\ge\!2$, the peak is associated with the worst gap $\mathcal{G}\!=\!g$, while the preceding gaps are bounded by $g$ and have a conditional mean $\bar{G}_{|g}$. The approximation replaces the random terms inside the product with their means and truncates the tail $\mathbb{P}(\mathcal{G}\!>\!g_{\mathrm{cut}})$.
	\end{proof}
    \vspace{-4pt}
	For a perfect channel ($p_s\!=\!1$), $X_k\!\equiv\!1$ and $\tilde{\mathcal{Y}}_r$ reduces to the deterministic lower bound $\tilde{\mathcal{Y}}_r\!=\!(N\!+\!1)! h_r^{N+1}$ (see Fig. \ref{fig_Ymaxk_D40}).

    \begin{figure}[!t]
    \centering
    \begin{minipage}[t]{0.32\columnwidth}
        \centering
        \vspace{0pt}\includegraphics[width=\linewidth]{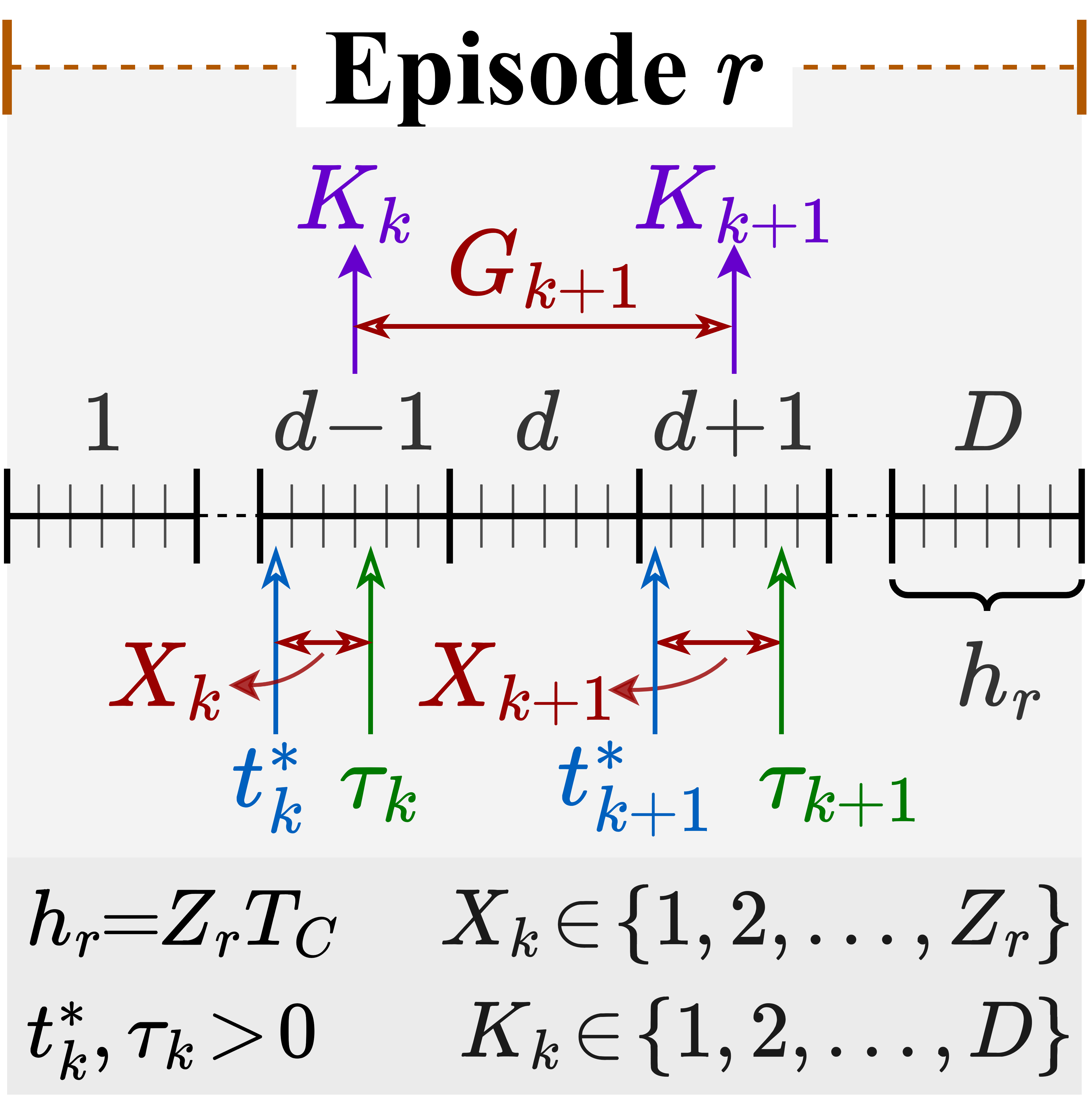}
        \caption{Variables in Lemma~1.}
        \label{fig_episodic_vars}
    \end{minipage}%
    \hfill
    \begin{minipage}[t]{0.64\columnwidth}
        \centering
        \vspace{0pt}
        \includegraphics[width=\linewidth]{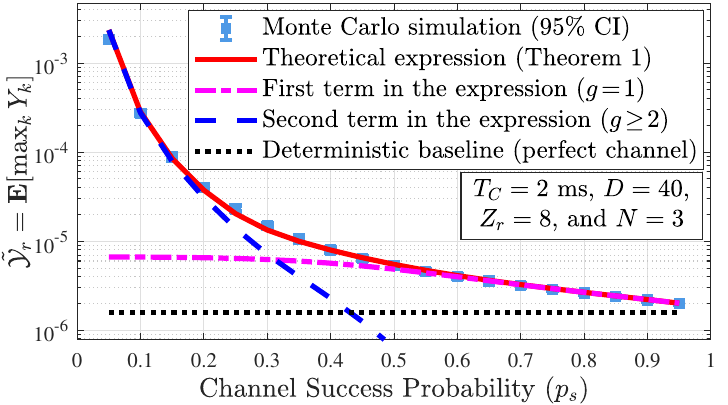}
        \caption{Theoretical $\tilde{\mathcal{Y}}_r$ vs.\ simulation.}
        \label{fig_Ymaxk_D40}
    \end{minipage}
    \end{figure}

    \begin{table*}[!tb]
		\centering
        \addtocounter{table}{2}
		\caption{Communication count and ratio to Uniform across $\bar{\epsilon}_{\textsc{r}}$ and $p_s$.}
		\label{tab_compact_comparison}
        \addtocounter{table}{-3}
		\footnotesize 
		\setlength{\tabcolsep}{1.2pt} 
		\renewcommand{\arraystretch}{0.80} 
		\newcommand{\cval}[2]{#1\,{\scriptsize\textcolor{black!70}{(#2\%)}}}
		\resizebox{\linewidth}{!}{%
			\begin{tabular}{l|*{4}{c}|*{4}{c}|*{4}{c}}
				\toprule
                \multicolumn{1}{c|}{($N=3$)}
				& \multicolumn{4}{c|}{$p_s=1.0$}
				& \multicolumn{4}{c|}{$p_s=0.75$}
				& \multicolumn{4}{c}{$p_s=0.50$} \\
				Policy \quad \ \quad \hfill $\bar{\epsilon}_{\text{\textsc{r}}}$
				& 0.01 & 0.05 & 0.10 & 0.20
				& 0.01 & 0.05 & 0.10 & 0.20
				& 0.01 & 0.05 & 0.10 & 0.20 \\
				\midrule
				\textsc{ed}
				& \cval{5144}{79} & \cval{2239}{75} & \cval{1598}{73} & \cval{980}{60}
				& \cval{6752}{76} & \cval{3035}{75} & \cval{2134}{73} & \cval{1403}{64}
				& \cval{9365}{66} & \cval{4600}{74} & \cval{3259}{73} & \cval{2285}{69} \\
				\textsc{apjsc}
				& \cval{5588}{86} & \cval{2600}{87} & \cval{1872}{86} & \cval{1381}{85}
				& \cval{7827}{88} & \cval{3524}{88} & \cval{2543}{87} & \cval{1861}{85}
				& \cval{12494}{89} & \cval{5442}{88} & \cval{3865}{86} & \cval{2836}{85} \\
				Episodic($15$)
				& \cval{6085}{93} & \cval{2839}{95} & \cval{2053}{94} & \cval{1506}{93}
				& \cval{8695}{98} & \cval{3857}{96} & \cval{2751}{94} & \cval{2027}{93}
				& \cval{13432}{95} & \cval{5981}{97} & \cval{4194}{94} & \cval{3107}{93} \\
				Episodic($20$)
				& \cval{6247}{96} & \cval{2873}{96} & \cval{2093}{96} & \cval{1544}{95}
				& \cval{8842}{99} & \cval{3967}{99} & \cval{2820}{96} & \cval{2105}{96}
				& \cval{13804}{98} & \cval{6094}{98} & \cval{4322}{96} & \cval{3180}{95} \\
				Uniform
				& \cval{6527}{100} & \cval{2982}{100} & \cval{2176}{100} & \cval{1626}{100}
				& \cval{8903}{100} & \cval{4023}{100} & \cval{2925}{100} & \cval{2185}{100}
				& \cval{14100}{100} & \cval{6194}{100} & \cval{4484}{100} & \cval{3332}{100} \\
				\bottomrule
			\end{tabular}%
		}
        \vspace{-6pt}
	\end{table*}
	
	\section{Numerical Results}
	We first assess the impact of channel awareness on the reconstruction error of \textsc{ed} and \textsc{apjsc}. Next, we examine the effect of $N$ using a linear chirp signal. Finally, we compare all policies using random stepped-frequency signals. 
	To model smoothly varying dynamics, we employ a \textbf{linear chirp} $x(t) = \cos\big(2\pi \big(f_0 + \frac{f_1-f_0}{2T_{\text{span}}}t\big)t\big)$ for $t\in[0,T_{\text{span}}]$, with $f_0=-4\text{ Hz}$, $f_1=4\text{ Hz}$, and $T_{\text{span}}=20\text{ s}$ (Fig.~\ref{fig_recon}).
    To capture diverse, randomly varying dynamics, we use \textbf{stepped-frequency signals} $y(t)=\cos(2\pi f_m t+\phi_m)$ for $t\in[(m-1)\Delta t,m\Delta t)$, where each $f_m$ is uniformly drawn from $[0,4]\text{ Hz}$ over $J=10$ intervals, with $T_{\text{span}}=80\text{ s}$ and $\Delta t=T_{\text{span}}/J=8\text{ s}$ (Fig.~\ref{fig_SteppedFreqSignal}).
	
	We evaluate the total transmission count over $T_{\text{span}}$ and the
	average peak reconstruction error:
	$
	\bar{\epsilon}_{\textsc{r}} = \frac{1}{\mathcal{K}} \sum_{k=1}^{\mathcal{K}} \epsilon^{\mathrm{true}}_k,
	$
	where $\epsilon^{\mathrm{true}}_k$ is defined in
	\eqref{eqn_TrueError_k} and $\mathcal{K}$ is the number of delivered samples.
	The simulation parameters are $T_C=2\textrm{ ms}$,
	$\mu_{\textsc{apjsc}}=0.3$, $\mu_{\textrm{Episodic}}=0.5$,
	$\gamma=10^{-15}$, and $g_{\mathrm{cut}}=100$.

	\begin{figure}[!tb]
		\centering
		\newcommand{\subfigrow}[2]{%
			\begin{minipage}[c]{0.03\linewidth}
				\subcaption{}\label{#1}
			\end{minipage}\hfill
			\begin{minipage}[c]{0.96\linewidth}
				\includegraphics[trim=0 0 4pt 0, clip,width=0.97\linewidth]{#2}
			\vspace{-0.5pt}
            \end{minipage}
		}
		
		\subfigrow{fig_recon}{ReconstructedSignals_NU.eps}
		\subfigrow{fig_err_u}{Error_Uniform.eps}
		\subfigrow{fig_err_a}{Error_AdaptiveJSC.eps}
		\subfigrow{fig_ep_error}{Error_Episodic.eps}
		\subfigrow{fig_err_e}{Error_EventDriven.eps}
		\subfigrow{fig_times}{SamplingTimes.eps}
		
		\caption{Original and reconstructed signals across policies (\subref{fig_recon}), reconstruction errors (\subref{fig_err_u})--(\subref{fig_err_e}), and sampling intervals (\subref{fig_times}).}
		\label{fig_all_signals}
        \vspace{2pt}
	\end{figure}
	
	\begin{figure}[!tb]
		\centering
		\includegraphics[trim=0 0pt 0 0, clip, width=0.96\linewidth]{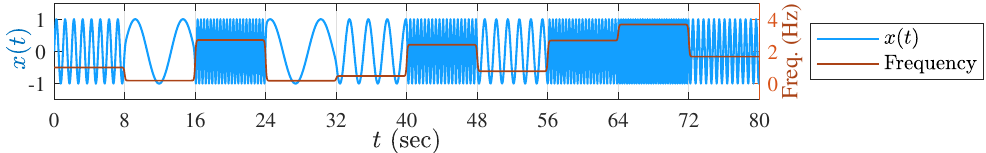}
		\caption{A realization of the random stepped-frequency signal.}
		\label{fig_SteppedFreqSignal}
	\end{figure}
	
	\textbf{Channel-Aware Performance:}
	Table~\ref{tab_comparison_Ch} evaluates the channel compensation factor
	$\mathcal{C}$ in \eqref{eqn_C_coef} across erasure probabilities for the
	chirp signal. In the channel-agnostic setting ($\mathcal{C}=1$),
	$\bar{\epsilon}_{\textsc{r}}$ increases as channel quality degrades.
	For \textsc{ed} with $\epsilon_d=0.01$ and $p_s=0.5$,
	$\bar{\epsilon}_{\textsc{r}}=0.0142$ violates the design budget by $42\%$.
	In contrast, incorporating $\mathcal{C}$ keeps $\bar{\epsilon}_{\text{\textsc{r}}}$ well within the target (up to slight variations due to the approximation in~\eqref{eqn_ch_approx}) by scheduling proactive transmissions to prevent violations.

\begin{table}[!t]
    \centering
    \caption{Error $\bar{\epsilon}_{\text{\textsc{r}}}$ under channel-aware scheduling.}
    \label{tab_comparison_Ch}
    \scriptsize
    \setlength{\tabcolsep}{3.0pt}
    \renewcommand{\arraystretch}{0.85}
    \setlength{\aboverulesep}{1.0pt}
    \setlength{\belowrulesep}{1.6pt}
    \setlength{\cmidrulesep}{1.0pt}
    \begin{tabular}{ll ccc ccc}
        \toprule
        & ($N=3$) & \multicolumn{3}{c}{\textsc{ed}} & \multicolumn{3}{c}{\textsc{apjsc}} \\
        \cmidrule(lr){3-5} \cmidrule(lr){6-8}
        $p_s$ & Design  /  $\epsilon_d$ & 0.01 & 0.1 & 0.2 & 0.01 & 0.1 & 0.2 \\
        \midrule
        1.0 & Ideal    & 0.0112 & 0.1055 & 0.2156 & 0.0123 & 0.1110 & 0.2171 \\
        \arrayrulecolor{black!50}\cmidrule(lr){1-8}\arrayrulecolor{black} 
        \multirow{2}{*}{0.75}
            & Aware    & 0.0104 & 0.1047 & 0.2210 & 0.0113 & 0.1082 & 0.2139 \\
            & Agnostic & 0.0121 & 0.1090 & 0.2267 & 0.0130 & 0.1129 & 0.2206 \\
        \arrayrulecolor{black!50}\cmidrule(lr){1-8}\arrayrulecolor{black} 
        \multirow{2}{*}{0.50}
            & Aware    & \textbf{0.0093} & \textbf{0.1038} & \textbf{0.2299} & \textbf{0.0090} & \textbf{0.1025} & \textbf{0.2084} \\
            & Agnostic & 0.0142 & 0.1176 & 0.2488 & 0.0145 & 0.1168 & 0.2278 \\
        \bottomrule
    \end{tabular}
    \vspace{2pt}
\end{table}

	\textbf{Impact of Prediction Order $N$:} 
	Table~\ref{tab_comparison_N} reports transmission counts for different
	prediction orders under $p_s=0.75$. Increasing $N$ from $1$ to $3$ yields
	substantial savings (e.g., \textsc{apjsc} decreases from $2852$ to $1927$
	transmissions for $\bar{\epsilon}_{\textsc{r}}=0.01$). Beyond $N=3$, gains
	plateau or slightly reverse due to polynomial
	oscillations over wider prediction intervals (we fix $N=3$ thereafter).
	\textsc{ed} achieves the lowest transmission count but requires continuous
	fine-grained sensing at intervals of $T_C$
	($T_{\text{span}}/T_C=10{,}000$ samples), providing no sampling savings.
	
	Figure~\ref{fig_all_signals} shows the reconstructed signals
	$\hat{x}(t)$, errors $e_{\textsc{R}}(t)$, and sampling intervals for the
	chirp under a fixed budget of $\sim\!750$ transmission attempts over
	$20\text{ s}$ with $p_s=0.75$. \textsc{ed} maintains nearly constant error (Fig. \ref{fig_err_e}) by
	triggering transmissions when the true error exceeds $\epsilon_d/\mathcal{C}$,
	yielding $\bar{\epsilon}_{\textsc{r}}=0.05$. The uniform policy has near-zero
	error during slow variations but exceeds $0.4$ during rapid variations (Fig. \ref{fig_err_u}),
	yielding $\bar{\epsilon}_{\textsc{r}}=0.08$. \textsc{apjsc} adapts its sampling
	interval to local dynamics, increasing it during slow variations
	(around $t=10\text{ s}$) and decreasing it during rapid variations (Fig.~\ref{fig_times}), with
	$\bar{\epsilon}_{\textsc{r}}=0.06$. The episodic policy updates its interval
	once per episode ($D=20$) and thus adapts more slowly than \textsc{apjsc} but faster than the
	uniform policy (Fig.~\ref{fig_times}), achieving
	$\bar{\epsilon}_{\textsc{r}}=0.078$.

    \begin{table}[!t]
    \centering
    \caption{Communication comparison across $N$ {\small($95\%$ CI)}.}
    \label{tab_comparison_N}
    \footnotesize
    \setlength{\tabcolsep}{1.5pt}
    \renewcommand{\arraystretch}{0.88}
    \newcommand{\ci}[1]{{\tiny\textcolor{black!70}{$\pm#1$}}}
    \resizebox{\columnwidth}{!}{%
    \begin{tabular}{c cccc cccc}
        \toprule
        \rule{0pt}{1.6ex}& \multicolumn{2}{c}{\textsc{ed}} & \multicolumn{2}{c}{\textsc{apjsc}} & \multicolumn{2}{c}{Episodic} & \multicolumn{2}{c}{Uniform} \\[-0.8ex]
        \cmidrule(lr){2-3}\cmidrule(lr){4-5}\cmidrule(lr){6-7}\cmidrule(lr){8-9}
        \rule{0pt}{1.5ex}$N$ \ / \ $\bar{\epsilon}_{\text{\textsc{r}}}$ & 0.01 & 0.1 & 0.01 & 0.1 & 0.01 & 0.1 & 0.01 & 0.1 \\[-0.5ex]
        \midrule
        1 & 2500\ci{43} & 737\ci{31} & 2852\ci{55} & 867\ci{33} & 3132\ci{96}  & 981\ci{33} & 3288\ci{69} & 981\ci{37} \\
        2 & 1764\ci{39} & 550\ci{27} & 1987\ci{49} & 679\ci{27} & 2117\ci{104} & 784\ci{35} & 2305\ci{61} & 791\ci{31} \\
        3 & 1628\ci{41} & 535\ci{29} & 1927\ci{49} & 585\ci{22} & 2073\ci{120} & 671\ci{33} & 2185\ci{53} & 671\ci{29} \\
        4 & 1508\ci{45} & 443\ci{27} & 1926\ci{49} & 584\ci{29} & 2070\ci{137} & 661\ci{41} & 2200\ci{57} & 606\ci{29} \\
        5 & 1552\ci{43} & 446\ci{33} & 1919\ci{47} & 605\ci{25} & 2082\ci{131} & 657\ci{37} & 2192\ci{59} & 655\ci{29} \\
        \bottomrule
    \end{tabular}%
    }
\end{table}

	\textbf{Communication and Sampling Efficiency:} 
	\label{sec_SteppedFreResults}
	Table~\ref{tab_compact_comparison} reports the transmission count required to
	achieve specific reconstruction errors $\bar{\epsilon}_{\textsc{r}}$ for
	different policies and $p_s$ values on stepped-frequency signals. Results are
	averaged over $200$ Monte Carlo iterations, each with 10 random frequencies.
	Relative to the uniform policy: (i)~\textbf{\textsc{ED}} reduces transmissions by
	$21\%\text{--}40\%$, providing a communication lower bound at the cost of fine-grained sensing; (ii)~\textbf{\textsc{APJSC}} reduces transmissions by
	$11\%\text{--}15\%$ while substantially lowering sampling overhead by \emph{sampling
	only when required}; and
	(iii)~\textbf{Episodic} achieves reductions of $2\%\text{--}7\%$ ($D=15$) and
	$1\%\text{--}5\%$ ($D=20$), showing that episodic control can also improve
	communication efficiency for coarse-grained control hardware.

    \vfill\pagebreak
    \section{Acknowledgment}
    
    This work has been supported by ELLIIT and the European Union (\textsc{ELIXIRION}, 101120135).
    
	\bibliographystyle{IEEEbib}
	\bibliography{Refs}

@inproceedings{AstromBernhardsson2002,
  title={Comparison of Riemann and Lebesgue sampling for first order stochastic systems},
  author={Karl J. {\AA}str{\"o}m and Bo Bernhardsson},
  booktitle={Proceedings of the 41st IEEE Conference on Decision and Control},
  volume={2},
  pages={2011--2016},
  year={2002},
  organization={}
}

@article{Tabuada2007,
  title={Event-triggered real-time scheduling of stabilizing control tasks},
  author={Tabuada, Paulo},
  journal={IEEE Transactions on Automatic control},
  volume={52},
  number={9},
  pages={1680--1685},
  year={2007},
  publisher={}
}

@article{AntaTabuada2010,
  title={To sample or not to sample: Self-triggered control for nonlinear systems},
  author={Anta, Adolfo and Tabuada, Paulo},
  journal={IEEE Transactions on automatic control},
  volume={55},
  number={9},
  pages={2030--2042},
  year={2010},
  publisher={}
}

@inproceedings{HeemelsJohanssonTabuada2012,
  title={An introduction to event-triggered and self-triggered control},
  author={Heemels, Wilhelmus PMH and Johansson, Karl Henrik and Tabuada, Paulo},
  booktitle={Proceedings of the 51st IEEE Conference on Decision and Control},
  pages={3270--3285},
  year={2012},
  organization={}
}

@article{Miskowicz2006,
  author  = {Marek Mi{\'s}kowicz},
  title   = {Send-On-Delta Concept: An Event-Based Data Reporting Strategy},
  journal = {Sensors},
  volume  = {6},
  number  = {1},
  pages   = {49--63},
  year    = {2006},
  doi     = {10.3390/s6010049}
}

@inproceedings{FeiziGoyalMedard2010,
  author    = {Soheil Feizi and Vivek K. Goyal and Muriel M{\'e}dard},
  title     = {Locally Adaptive Sampling},
  booktitle = {Proceedings of the 48th Annual Allerton Conference on
               Communication, Control, and Computing},
  pages     = {152--159},
  year      = {2010},
  doi       = {10.1109/ALLERTON.2010.5706901}
}

@article{FeiziGoyalMedard2012,
  author  = {Soheil Feizi and Vivek K. Goyal and Muriel M{\'e}dard},
  title   = {Time-Stampless Adaptive Nonuniform Sampling for
             Stochastic Signals},
  journal = {IEEE Transactions on Signal Processing},
  volume  = {60},
  number  = {10},
  pages   = {5440--5450},
  year    = {2012},
  doi     = {10.1109/TSP.2012.2208633}
}

@article{PremaratneHalgamugeMareels2013,
  author  = {U. Premaratne and S. K. Halgamuge and I. M. Y. Mareels},
  title   = {Event Triggered Adaptive Differential Modulation:
             A New Method for Traffic Reduction in Networked Control Systems},
  journal = {IEEE Transactions on Automatic Control},
  volume  = {58},
  number  = {7},
  pages   = {1696--1706},
  year    = {2013},
  doi     = {10.1109/TAC.2013.2248015}
}

@article{Schenato2008,
  author  = {Luca Schenato},
  title   = {Optimal estimation in networked control systems subject to random delay and packet drop},
  journal = {IEEE Transactions on Automatic Control},
  volume  = {53},
  number  = {5},
  pages   = {1311--1317},
  year    = {2008},
  doi     = {10.1109/TAC.2008.921341}
}

@article{NetworkedEventBasedDropout2009,
  author  = {Nguyen, Vinh Hao and Suh, Young Soo},
  title   = {Networked estimation for event-based sampling systems with packet dropouts},
  journal = {Sensors},
  volume  = {9},
  number  = {4},
  pages   = {3078--3100},
  year    = {2009},
  doi     = {10.3390/s90403078}
}

@article{PengHan2016,
  author  = {Peng, Chen and Han, Qing-Long},
  title   = {On designing a novel self-triggered sampling scheme for networked control systems with data losses and communication delays},
  journal = {IEEE Transactions on Industrial Electronics},
  volume  = {63},
  number  = {2},
  pages   = {1239--1248},
  year    = {2016},
  doi     = {10.1109/TIE.2015.2486776}
}

@article{SunPolyanskiyUysal2020,
  author  = {Yin Sun and Yury Polyanskiy and Elif Uysal},
  title   = {Sampling of the Wiener process for remote estimation over a channel with random delay},
  journal = {IEEE Transactions on Information Theory},
  volume  = {66},
  number  = {2},
  pages   = {1118--1135},
  year    = {2020},
  doi     = {10.1109/TIT.2019.2948616}
}

@article{kosta2017age,
  title={Age of information: A new concept, metric, and tool},
  author={Kosta, Antzela and Pappas, Nikolaos and Angelakis, Vangelis},
  journal={Foundations and Trends in Networking},
  volume={12},
  number={3},
  pages={162--259},
  year={2017},
  publisher={Emerald Publishing Limited}
}

@article{YatesSunBrownKaulModianoUlukus2021,
  author  = {Yates, Roy D and Sun, Yin and Brown, D Richard and Kaul, Sanjit K and Modiano, Eytan and Ulukus, Sennur},
  title   = {Age of information: An introduction and survey},
  journal = {IEEE Journal on Selected Areas in Communications},
  volume  = {39},
  number  = {5},
  pages   = {1183--1210},
  year    = {2021},
  doi     = {10.1109/JSAC.2021.3065072}
}

@inproceedings{shisher2024monotonicity,
  title={On the monotonicity of information aging},
  author={Shisher, MD Kamran Chowdhury and Sun, Yin},
  booktitle={IEEE INFOCOM 2024-IEEE Conference on Computer Communications Workshops (INFOCOM WKSHPS)},
  pages={01--06},
  year={2024},
  organization={}
}

@article{ornee2021sampling,
  title={Sampling and remote estimation for the Ornstein-Uhlenbeck process through queues: Age of information and beyond},
  author={Ornee, Tasmeen Zaman and Sun, Yin},
  journal={IEEE/ACM Transactions on Networking},
  volume={29},
  number={5},
  pages={1962--1975},
  year={2021},
  publisher={}
}

@article{kountouris2021semantics,
  title={Semantics-empowered communication for networked intelligent systems},
  author={Kountouris, Marios and Pappas, Nikolaos},
  journal={IEEE Communications Magazine},
  volume={59},
  number={6},
  pages={96--102},
  year={2021},
  publisher={}
}

@article{luo2025information,
  title={From information freshness to semantics of information and goal-oriented communications},
  author={Luo, Jiping and Delfani, Erfan and Salimnejad, Mehrdad and Pappas, Nikolaos},
  journal={arXiv preprint arXiv:2512.12758},
  year={2025}
}

@book{haykin2008adaptive,
  title={Adaptive filter theory},
  author={Haykin, Simon S},
  year={2008},
  publisher={Pearson Education India}
}

@book{stoer1993introduction,
  author    = {Stoer, Josef and Bulirsch, Roland},
  title     = {Introduction to Numerical Analysis},
  edition   = {2nd},
  publisher = {Springer-Verlag},
  address   = {New York},
  year      = {1993}
}

@inproceedings{kutsevol2023experimental,
  title={Experimental study of transport layer protocols for wireless networked control systems},
  author={Kutsevol, Polina and Ayan, Onur and Pappas, Nikolaos and Kellerer, Wolfgang},
  booktitle={2023 20th Annual IEEE International Conference on Sensing, Communication, and Networking (SECON)},
  pages={438--446},
  year={2023},
  organization={}
}

\end{document}